\documentclass[prl,twocolumn,aps,superscriptaddress,longbibliography]{revtex4-2}

\usepackage{amsmath,amssymb,amsthm}
\usepackage{mathtools}
\usepackage{bm}
\usepackage{graphicx}
\usepackage{hyperref}
\usepackage{xcolor}
\usepackage{cleveref}
\usepackage{tikz}
\usetikzlibrary{arrows.meta, decorations.markings}

\allowdisplaybreaks

\newtheoremstyle{prlthm}{}{}{\itshape}{}{\bfseries}{.}{ }{}
\theoremstyle{prlthm}
\newtheorem{theorem}{Theorem}
\newtheorem{corollary}[theorem]{Corollary}

\newcommand{\tr}{\operatorname{Tr}}
\newcommand{\cF}{\mathcal{F}}

\newcommand{\ket}[1]{|#1\rangle}
\newcommand{\bra}[1]{\langle#1|}

\newcommand{\Var}{\operatorname{Var}}

\begin{document}

\title{Precision and Privacy in Distributed Quantum Sensing:\\
A Quantum Fisher Information Duality}

\author{Farhad Farokhi}
\affiliation{Department of Electrical and Electronic Engineering,
University of Melbourne, Melbourne, VIC 3010, Australia}

\date{\today}

\begin{abstract}
We establish a quantum Fisher information (QFI) duality for distributed quantum sensor networks with local phase encoding. For any $N$-qubit probe state, where $N$ denotes the number of sensors, $\mathcal{F}_Q( \hat{\bm{w}}^\top \bm{\theta}) +\mathcal{F}_Q( \hat{\bm{v}}^\top \bm{\theta})\leq N$ for all unit orthogonal sensing directions $\hat{\bm{w}}$ and $\hat{\bm{v}}$, with equality for all equatorial
states when $N=2$ and for Greenberger–Horne–Zeilinger (GHZ) states when $N\geq 2$. Heisenberg-limited precision for direction $\hat{\bm{w}}$, $\mathcal{F}_Q(\hat{\bm{w}}^\top \bm{\theta})=N$, saturates the bound and simultaneously forces zero QFI for all other independent directions. This can be interpreted as the condition for parameter privacy in distributed quantum sensing: attaining Heisenberg-limited precision for the sensing target renders all alternative privacy-intrusive estimations impossible. 
\end{abstract}

\maketitle

\section{Introduction}
Distributed quantum sensor networks, such as arrays of entangled atomic
clocks~\cite{komar2014quantum}, nitrogen-vacancy (NV) centre magnetometers~\cite{barry2016optical}, and quantum
gravimeters~\cite{stray2022quantum}, are transitioning from laboratory demonstrations to operational infrastructure~\cite{degen2017quantum}. Their defining advantage is Heisenberg-limited precision, that is, distributing an appropriate probe state, such as  Greenberger–Horne–Zeilinger (GHZ)-entangled states, across $N$ sensors reduces the variance of a distributed field estimate~\cite{giovannetti2006quantum,eldredge2018optimal,proctor2018multiparameter,ge2018distributed}.

As these networks enter operational contexts, such as geodesy, clinical brain imaging, defence-grade gravimetry, the local field values $\theta_j$ become sensitive data. This would be particularly an issue in the context of health and biology~\cite{price2019privacy}. The fusion centre may legitimately require only the aggregate $\sum_jw_j\theta_j$, while individual parameters (a patient's neural field, an institution's
clock offset, or a site's gravitational anomaly) should remain private from adversarial curious-but-honest observers.

Classical sensor privacy has been studied via Fisher information privacy~\cite{farokhi2019fisher_sandberg}, information theoretic privacy~\cite{issa2020operational},
and differential privacy~\cite{wei2020federated}. The quantum setting raises new questions regarding the interplay between entanglement for Heisenberg-limited precision versus the privacy of individual sensor parameters (or a small subset). Particularly, it is paramount to investigate fundamental trade-off constraining precision and privacy. 

We answer this precisely via a quantum Fisher information (QFI) duality, establishing a conservation law on how quantum Fisher information is partitioned across orthogonal parameter directions under local unitary encoding. We show that our result implies co-optimality of precision and privacy, i.e., attaining Heisenberg-limited precision for the sensing target renders all alternative privacy-intrusive estimations impossible. However, a genuine tradeoff emerges when several independent targets must be estimated simultaneously.

Our QFI duality is structurally distinct from the incompatibility bounds of multi-parameter
metrology~\cite{ragy2016compatibility,demkowicz2020multi,gessner2018sensitivity}, which bound multi-target precision via measurement incompatibility. Our result derives from the trace identity $\tr(\cF_Q)=N$ imposed
by the \emph{local unitary encoding} structure, and not from measurement incompatibility, and are mapped into a \emph{precision-privacy} tradeoff. Quantum privacy metrics have started to appear more recently in the literature with a focus on developing quantum equivalents of differential privacy and information-theoretic privacy~\cite{farokhi2024maximal, farokhi2024barycentric,hirche2023quantum, farokhi2025sample, cheng2025invitation}. However, these results have not been applied to distributed quantum sensing.  

In this letter, we establish a QFI duality or conservation law for distributed quantum sensor networks
with local phase encoding, demonstrating that the sum of the Fisher information for two distinct target aggregate estimates is bounded by the number of the sensors. The bound is attained by  all equatorial states when there are only two sensors and by GHZ states generally.
Heisenberg-limited precision for one aggregate implies zero QFI for all other directions, ensuring privacy. This means no privacy-utility trade-off when the experiment is set up to estimate for only a single sensing target. However, the QFI duality presents a genuine tradeoff when several independent targets must be estimated simultaneously, i.e., Heisenberg-limited precision in two distinct directions is not attainable.

\section{Formulation}
Consider $N$ sensors, each equipped with a qubit space $\mathcal{H}\cong\mathbb{C}^2$. A central station or entity prepares a pure probe state $\ket{\psi_0}\in \mathcal{H}\otimes \cdots \otimes\mathcal{H}$ and distributes the $j$-th qubit
to sensor~$j$. Each sensor applies the local phase encoding
\begin{equation}
U_j(\theta_j)=\exp\left(-\frac{1}{2}i\theta_j\sigma_z\right),\quad\theta_j\in[0,2\pi),
\label{eq:encoding}
\end{equation}
where $\sigma_z$ is the Pauli $Z$ operator. This results in the encoded state
\begin{align}
    \ket{\psi(\bm{\theta})}
    &=(U_1(\theta_1)\otimes \cdots \otimes U_N(\theta_N))\ket{\psi_0}\\
    &=\exp\left(-\frac{1}{2} i\sum_{j=1}^N\theta_j\sigma_z^{(j)}\right)\ket{\psi_0}
\end{align}
where $\bm{\theta}=(\theta_1\dots,\theta_N)$ and $\sigma_z^{(j)}=I\otimes \cdots \otimes \sigma_z \otimes \cdots \otimes I $ with location of $\sigma_z$ operator aligning with the $j$-th qubit (i.e., Pauli $Z$ operator is applied to the $j$-th qubit). 

The quantum Fisher information matrix (QFIM) is 
\begin{align}
    [\cF_Q(\bm{\theta})]_{ij} = \frac{1}{2}\mathrm{Tr}\!\left[\rho_{\boldsymbol{\theta}}\,\{L_i, L_j\}\right],
\end{align}
where $L_i$ is the symmetric logarithmic derivative (SLD) with respect to $\theta_i$, defined implicitly as the solution to  $\partial \rho(\bm{\theta})/\partial \theta_i = \{L_i, \rho(\bm{\theta})\}/2$ with density operator $\rho(\bm{\theta}) = \ket{\psi(\bm{\theta})} \bra{\psi(\bm{\theta})}$ and anti-commutator $\{A,B\}=AB+BA$. Using~(27) in~\cite{liu2020quantum}, the QFIM can be simplified to
\begin{align}
[\cF_Q(\bm{\theta})]_{ij}=\langle \sigma_z^{(i)}\sigma_z^{(j)}\rangle
-\langle \sigma_z^{(i)}\rangle\langle \sigma_z^{(j)}\rangle,
\label{eq:qfim}
\end{align}
where $\langle A\rangle=\bra{\psi_0}A\ket{\psi_0}$. The quantum Fisher information (QFI) for any linear combination $\bm{u}^\top \bm{\theta}$ is
$\cF_Q(\bm{u}^\top \bm{\theta})=\bm{u}^\top\cF_Q\bm{u}$; see
Proposition~2.1 in~\cite{liu2020quantum}.

The fusion station or entity wants to estimate a 
target combination of parameters $\bm{\theta}$ denoted by $f=\hat{\bm{w}}^\top \bm{\theta}$, where $\hat{\bm{w}}$ is a normalised vector, i.e., $\|\hat{\bm{w}}\|_2=1$, capturing the weight of each parameter in the target combination. Vector $\hat{\bm{w}}$ may be referred to as the sensing direction. The quantum Cram\'{e}r-Rao bound in~\cite{HELSTROM1967101}
states that the variance of any unbiased estimator
$\hat{f}$ of $f$, i.e., $\mathbb{E}\{\hat{f}\}=f$,  satisfies
\begin{equation}
\Var(\hat{f})\geq\frac{1}{\cF_Q(\hat{\bm{w}}^\top \bm{\theta})}.
\label{eq:qcrb}
\end{equation}
Therefore, QFI provides a fundamental bound on quality of unbiased estimators. In this letter, we present a QFI duality, demonstrating that the sum of QFIs for any two orthogonal sensing directions is bounded. That is, we cannot obtain high-quality estimates in two directions simultaneously.  

\section{QFI Duality}
The probe state $\ket{\psi_0}$ is \emph{equatorial} if
$\langle \sigma_z^{(j)}\rangle=0$ for all $j$. Examples of equatorial probe states are the $N$-qubit GHZ state
$\ket{\Phi_N^+}=(\ket{0}\otimes\cdots \otimes\ket{0}+\ket{1}\otimes\cdots \otimes\ket{1})/\sqrt{2}$ and the separable
probe $\ket{+}^{\otimes N}$ with $\ket{\pm}=(\ket{0}\pm\ket{1})/\sqrt{2}$. Now, we are ready to present the main result of this letter.

\begin{theorem}
\label{thm:duality}
For any two unit vectors $\hat{\bm{w}}\perp \hat{\bm{v}}$:
\begin{equation}
\cF_Q(\hat{\bm{w}}^\top \bm{\theta})
+\cF_Q(\hat{\bm{v}}^\top \bm{\theta})\leq N.
\label{eq:duality}
\end{equation}
Equality holds with equatorial probe states for $N=2$.
\end{theorem}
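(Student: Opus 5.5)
The plan is to work directly with the QFIM expression \eqref{eq:qfim}. Since $\cF_Q$ is a covariance matrix of the commuting observables $\sigma_z^{(1)},\dots,\sigma_z^{(N)}$ in the state $\ket{\psi_0}$, it is real symmetric and positive semidefinite. Because $(\sigma_z^{(j)})^2=I$, its diagonal entries are $[\cF_Q]_{jj}=1-\langle\sigma_z^{(j)}\rangle^2\le 1$. Summing the diagonal gives the trace identity
\begin{equation*}
\tr(\cF_Q)=N-\sum_{j=1}^N\langle\sigma_z^{(j)}\rangle^2\le N ,
\end{equation*}
with equality if and only if the probe is equatorial.

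Next, given orthonormal $\hat{\bm{w}},\hat{\bm{v}}$, I extend them to an orthonormal basis $\{\hat{\bm{w}},\hat{\bm{v}},\bm{u}_3,\dots,\bm{u}_N\}$ of $\mathbb{R}^N$. The trace is basis independent, and $\cF_Q(\bm{u}^\top\bm{\theta})=\bm{u}^\top\cF_Q\bm{u}$ (Proposition~2.1 of~\cite{liu2020quantum}), so
\begin{equation*}
\tr(\cF_Q)=\hat{\bm{w}}^\top\cF_Q\hat{\bm{w}}+\hat{\bm{v}}^\top\cF_Q\hat{\bm{v}}+\sum_{k\ge3}\bm{u}_k^\top\cF_Q\bm{u}_k .
\end{equation*}
Positive semidefiniteness makes every term in the last sum nonnegative. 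Dropping them gives $\cF_Q(\hat{\bm{w}}^\top\bm{\theta})+\cF_Q(\hat{\bm{v}}^\top\bm{\theta})\le\tr(\cF_Q)\le N$, which is \eqref{eq:duality}.

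For the equality claim with $N=2$, there are no extra basis vectors, so the sum over $k\ge 3$ is empty. The left-hand side therefore equals $\tr(\cF_Q)$ exactly, and for an equatorial state this is $N=2$. Thus equality holds for every orthonormal pair $\hat{\bm{w}}\perp\hat{\bm{v}}$, which covers all unit orthogonal pairs in $\mathbb{R}^2$.

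There is no real obstacle. The only points needing care are justifying positive semidefiniteness, since \eqref{eq:qfim} is the covariance $\bm{u}^\top\cF_Q\bm{u}=\Var(\sum_j u_j\sigma_z^{(j)})\ge0$, and noting that $\cF_Q$ does not depend on $\bm{\theta}$ because the generators commute with the encoding. The step I would state most carefully is the invariance of the trace under the orthonormal change of basis: it is what turns the local bound $[\cF_Q]_{jj}\le1$ into a bound along arbitrary directions.
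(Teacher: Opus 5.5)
Your proof is correct and follows essentially the same route as the paper. Both arguments bound $\tr(\cF_Q)\le N$ via the diagonal entries and then use positive semidefiniteness to show the two-direction sum cannot exceed the trace, with the $N=2$ equality coming from $\hat{\bm{w}}\hat{\bm{w}}^\top+\hat{\bm{v}}\hat{\bm{v}}^\top=I$ (your ``empty sum over $k\ge 3$''); the only cosmetic difference is that the paper works through the spectral decomposition of $\cF_Q$ and the bound $(\hat{\bm{w}}^\top\bm{e}_k)^2+(\hat{\bm{v}}^\top\bm{e}_k)^2\le 1$, whereas you complete $\{\hat{\bm{w}},\hat{\bm{v}}\}$ to an orthonormal basis and drop the remaining nonnegative diagonal terms, which is slightly more direct.
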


\begin{proof}
We first show that $\tr(\cF_Q(\bm{\theta}))\leq N$. Setting $i=j$ in~\eqref{eq:qfim} gives $[\cF_Q(\bm{\theta})]_{ii}=\langle (\sigma_z^{(i)})^2\rangle-\langle (\sigma_z^{(i)})\rangle^2\leq\langle (\sigma_z^{(i)})^2\rangle =\langle I\rangle=1$. 
Therefore, $\tr(\cF_Q(\bm{\theta}))=\sum_{i=1}^N [\cF_Q(\bm{\theta})]_{ii}\leq N$. Proposition~2.1 in~\cite{liu2020quantum} shows that  $\cF_Q$ is a real symmetric positive semi-definite matrix. Therefore, it admits spectral decomposition $\cF_Q(\bm{\theta})=\sum_k\lambda_k\bm{e}_k\bm{e}_k^\top$ with $\lambda_k\geq 0$, $\bm{e}_k^\top \bm{e}_k=1$, and $\tr(\cF_Q(\bm{\theta}))=\sum_{k}\lambda_k\leq N$. For any unit vector $\hat{\bm{u}}$, we have $\cF_Q(\hat{\bm{u}}^\top \bm{\theta})=\hat{\bm{u}}^\top\cF_Q(\bm{\theta})\hat{\bm{u}}=\sum_k\lambda_k(\hat{\bm{u}}^\top\bm{e}_k)^2$. We can write each  eigenvector $\bm{e}_k$ as $\bm{e}_k=\alpha_k\hat{\bm{w}}+\beta_k\hat{\bm{v}}+\bm{r}_k,$ where $\alpha_k=\hat{\bm{w}}^\top \bm{e}_k$, $\beta_k=\hat{\bm{v}}^\top \bm{e}_k$, and $\bm{r}_k\perp\hat{\bm{w}},\hat{\bm{v}}$ (the component of $\bm{e}_k$ orthogonal to both $\hat{\bm{w}}$ and $\hat{\bm{v}}$, obtained by subtracting
their projections). We have $|\bm{e}_k|^2=\alpha_k^2+\beta_k^2+|\bm{r}_k|^2=1$, which results in $\alpha_k^2+\beta_k^2\leq 1$. Recalling the definitions of $\alpha_k$ and $\beta_k$, we get $(\hat{\bm{w}}^\top \bm{e}_k)^2+(\hat{\bm{v}}^\top \bm{e}_k)^2\leq 1$. Combining all these, we get $\cF_Q(\hat{\bm{w}}^\top \bm{\theta})
\!+\!\cF_Q(\hat{\bm{v}}^\top \bm{\theta})=\sum_{k} \lambda_k[(\hat{\bm{w}}^\top\bm{e}_k)^2\!+\!(\hat{\bm{v}}^\top\bm{e}_k)^2]\leq \sum_{k}\lambda_k=N.$

We now show the equality for equatorial pure probe states and $N=2$. Note that $\langle \sigma_z^{(i)}\rangle=0$, which implies that $[\cF_Q(\bm{\theta})]_{ii}=\langle (\sigma_z^{(i)})^2\rangle =\langle I\rangle=1$. Therefore, $\tr(\cF_Q(\bm{\theta}))= 2$. The rest follows from that $\cF_Q(\hat{\bm{w}}^\top \bm{\theta})
+\cF_Q(\hat{\bm{v}}^\top \bm{\theta})= \hat{\bm{w}}^\top \cF_Q(\bm{\theta})\hat{\bm{w}}+\hat{\bm{v}}^\top \cF_Q(\bm{\theta})\hat{\bm{v}}= \tr( \cF_Q(\bm{\theta})\hat{\bm{w}}\hat{\bm{w}}^\top) +\tr( \cF_Q(\bm{\theta})\hat{\bm{v}}\hat{\bm{v}}^\top)
= \tr( \cF_Q(\bm{\theta})(\hat{\bm{w}}\hat{\bm{w}}^\top+\hat{\bm{v}}\hat{\bm{v}}^\top))= \tr( \cF_Q(\bm{\theta}))=2$. Note that $\hat{\bm{w}}\hat{\bm{w}}^\top+\hat{\bm{v}}\hat{\bm{v}}^\top=I$, which holds because, in $N=2$, any two orthonormal vectors form a complete basis.
\end{proof}

For $N=2$, the QFI duality in~\eqref{eq:duality}, which holds as equality, is a conservation law of information. The sum of QFI in
any two independent parameter combinations is fixed for all equatorial states, regardless of the entanglement structure. For $N>2$, this relationship can be interpreted as a precision-privacy complementarity relation. 

\begin{figure}
    \centering
    \begin{tikzpicture}[x=2.0cm,y=2.0cm,
      >=Stealth,font=\small]
      \fill[blue!7!white] (0,0) -- (2,0) -- (0,2) -- cycle;
      \draw[line width=1.5pt,black] (2,0) -- (0,2);
      \draw[->, thin] (-0.12,0) -- (2.45,0)
        node[right, inner sep=3pt]{$\mathcal{F}_Q\!\left(\hat{w}^{\!\top}\bm{\theta}\right)$};
      \draw[->, thin] (0,-0.12) -- (0,2.45)
        node[above, inner sep=3pt]{$\mathcal{F}_Q\!\left(\hat{v}^{\!\top}\bm{\theta}\right)$};
      \foreach \val in {1, 2}{
        \draw[thin] (\val,  0.04) -- (\val, -0.04)
          node[below, inner sep=2pt] {$\val$};
        \draw[thin] (0.04, \val)  -- (-0.04, \val)
          node[left, inner sep=2pt] {$\val$};
      }
      \fill[red!80!black] (2,0) circle (2.8pt);
      \node[above left=2pt, red!80!black]
        at (2.5, 0.15) {$|\Phi^+\rangle$};
      \node[below=-15pt, font=\scriptsize, red!80!black]
        at (2.25, 0.0) {$\phi=0$};
      \fill[blue!65!black] (1,1) circle (2.8pt);
      \node[right=5pt, blue!65!black]
        at (1.05, 1.05) {$\ket{+}\otimes\ket{+}$};
      \node[right=5pt, font=\scriptsize, blue!45!black]
        at (1.1, 0.85) {$\phi=\pi/4$};
      \fill[green!55!black] (0,2) circle (2.8pt);
      \node[right=5pt, green!55!black]
        at (0.03, 2.15) {$|\Psi^+\rangle$};
      \node[right=5pt, font=\scriptsize, green!40!black]
        at (0.03, 1.95) {$\phi=\pi/2$};
      \node[black, font=\footnotesize]
        at (0.44, 0.27) {Achievable};
      \node[black, font=\footnotesize]
        at (1.61, 1.70) {Unachievable};
      \node[
        rotate = 0,
        above  = 5pt,
        font   = \scriptsize,
        black,
      ] at (2.06, 2.06)              {$\mathcal{F}_Q(\hat{w}^{\!\top}\bm{\theta})+\mathcal{F}_Q(\hat{v}^{\!\top}\bm{\theta}) = 2$};
      \draw[<->,dashed,black!45] (1.3,2.2) -- (.53,1.52);
      \node[
        rotate = -45,
        below  = 4pt,
        font   = \scriptsize,
        gray!62,
      ] at (1.0, 0.97) {$\phi$ increasing};    
      \draw[<-,black!45,xshift=-1.3mm,yshift=-1.5mm] (.9,1.1) -- (1.1,0.9);
    \end{tikzpicture}
    \caption{Precision-privacy Pareto frontier for $N=2$ equatorial probes, with $\hat{\bm{w}}=\begin{bmatrix} 1/\sqrt{2} & 1/\sqrt{2}\end{bmatrix}$ targeting the sum $(\theta_1+\theta_2)/\sqrt{2}$ and $\hat{\bm{v}}=\begin{bmatrix} 1/\sqrt{2} & -1/\sqrt{2}\end{bmatrix}$ targeting the difference $(\theta_1-\theta_2)/\sqrt{2}$. Every equatorial two-qubit state lies exactly on the conservation-law line $\mathcal{F}_Q(\hat{\bm{w}}^{\top} \bm{\theta})+\mathcal{F}_Q( \hat{\bm{v}}^{\top}\bm{\theta})=2$ (Theorem~\ref{thm:duality}, $N=2$ equality); none lies strictly inside it. Three states from the Bell-state family $\cos(\phi)\ket{\Phi^+} +\sin(\phi)\ket{\Psi^+}$ are highlighted (arrow: direction of increasing~$\phi$): the Bell state $\ket{\Phi^+}$ (red, $\phi=0$) at the co-optimal corner $(2,0)$, where Heisenberg-limited precision for the sum coexists with zero QFI for the difference, making individual parameter recovery impossible; the separable state $\ket{+}\otimes\ket{+}$ (blue, $\phi=\pi/4$) at the midpoint $(1,1)$, with equal QFI in both directions; and the Bell state $\ket{\Psi^+}$ (green, $\phi=\pi/2$) at $(0,2)$, which is maximally sensitive to the difference but blind to the sum. For $N>2$, the frontier shifts outward to $\mathcal{F}_Q(\hat{\bm{w}}^{\top} \bm{\theta})+\mathcal{F}_Q(\hat{\bm{v}}^{\top}\bm{\theta})=N$.}
    \label{fig:pareto}
\end{figure}

Figure~\ref{fig:pareto} illustrates Theorem~\ref{thm:duality} for
$N=2$ with $\hat{\bm{w}}=\begin{bmatrix}
1/\sqrt{2} & 1/\sqrt{2}\end{bmatrix}$ and $\hat{\bm{v}}=\begin{bmatrix}
1/\sqrt{2} & -1/\sqrt{2}\end{bmatrix}$. We can parameterise equatorial two-qubit states as $\ket{\psi_\phi}= \cos(\phi)\ket{\Phi^+}+ \sin(\phi)\ket{\Psi^+},$ for all $\phi\in[0,\pi/2],$
where $\ket{\Phi^+}=(\ket{00}+\ket{11})/\sqrt{2}$ and
$\ket{\Psi^+}=(\ket{01}+\ket{10})/\sqrt{2}$ are Bell states.
Both are equatorial, mutually orthogonal, and their superposition
is normalised for all~$\phi$.
The local Pauli--$Z$ operators satisfy
$\sigma_z^{(1)}\sigma_z^{(2)}\ket{\Phi^+} = +\ket{\Phi^+}$ and
$\sigma_z^{(1)}\sigma_z^{(2)}\ket{\Psi^+} = -\ket{\Psi^+}$,
so the off-diagonal entry of the QFIM~\eqref{eq:qfim} evaluates to $[\cF_Q(\bm{\theta})]_{ij}=\langle\sigma_z^{(1)}\sigma_z^{(2)}\rangle= \cos^2\phi - \sin^2\phi = \cos 2\phi.$ Therefore, $\mathcal{F}_Q(\hat{\bm{w}}^\top \bm{\theta})= 1 + \cos 2\phi$ and $\mathcal{F}_Q(\hat{\bm{v}}^\top \bm{\theta})= 1 - \cos 2\phi.$ Evidently, $\mathcal{F}_Q(\hat{\bm{w}}^\top \bm{\theta})+\mathcal{F}_Q(\hat{\bm{v}}^\top \bm{\theta})=2$ for every~$\phi$, confirming the conservation law in Theorem~\ref{thm:duality}.
As $\phi$ increases from $0$ to $\pi/2$, the operating point traces
the \emph{complete} Pareto frontier, the straight line capturing the boundary of the achievable region for the quantum Fisher information, passing
through three special cases:
\begin{itemize}
  \item \emph{Bell state} $\ket{\Phi^+}$ ($\phi=0$):
        Heisenberg-limited precision for the sum
        $\hat{\bm{w}}^\top\bm{\theta}=(\theta_1+\theta_2)/\sqrt{2}$
        while 
        $\hat{\bm{v}}^\top\bm{\theta}=(\theta_1-\theta_2)/\sqrt{2}$
        is completely impossible to be estimated ($\mathcal{F}_Q(\hat{\bm{v}}^\top \bm{\theta})=0$), making
        individual parameter recovery impossible, but the target estimation attains the highest achievable quality.
  \item \emph{Separable state} $\ket{+}\otimes\ket{+}$
        ($\phi=\pi/4$): Equal QFI in
        both directions, implying neither Heisenberg-limited nor
        individual privacy. 
  \item \emph{Bell state} $\ket{\Psi^+}$ ($\phi=\pi/2$): Maximally sensitive to the difference $\hat{\bm{v}}^\top \bm{\theta}=(\theta_1-\theta_2)/\sqrt{2}$ but blind to the sum $\hat{\bm{w}}^\top\bm{\theta}=(\theta_1+\theta_2)/\sqrt{2}$.
\end{itemize}
Every equatorial two-qubit state lies on this frontier; none lies
strictly inside it. This verifies the equality in Theorem~\ref{thm:duality} for $N=2$. 

\begin{corollary}[GHZ optimality]
\label{cor:coopt}
Let $\hat{\bm{w}}=\bm{1}/\sqrt{N}$ with $\bm{1}=\begin{bmatrix}
    1 & \cdots & 1
\end{bmatrix}$. 
The $N$-qubit GHZ probe admits \begin{equation}
\cF_Q(\hat{\bm{w}}^\top \bm{\theta})=N,\qquad
\cF_Q(\hat{\bm{v}}^\top \bm{\theta})=0
\quad\forall\,\hat{\bm{v}}\perp\hat{\bm{w}}.
\label{eq:ghzqfi}
\end{equation}
GHZ is the unique equatorial probe (up to local unitaries commuting
with the $\sigma_z$) achieving~\eqref{eq:ghzqfi}.
\end{corollary}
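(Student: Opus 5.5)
The plan is to compute the QFIM of the GHZ probe explicitly from \eqref{eq:qfim}, read off both identities in \eqref{eq:ghzqfi}, and then run the argument in reverse for uniqueness.

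\emph{Existence.} The GHZ state $\ket{\Phi_N^+}$ is equatorial, so $\langle\sigma_z^{(j)}\rangle=0$ for every $j$. Each basis vector $\ket{0\cdots0}$ and $\ket{1\cdots1}$ is a $+1$ eigenvector of $\sigma_z^{(i)}\sigma_z^{(j)}$ for $i\neq j$. Hence $\langle\sigma_z^{(i)}\sigma_z^{(j)}\rangle=1$, and also trivially for $i=j$. Thus \eqref{eq:qfim} gives $\cF_Q(\bm{\theta})=\bm{1}\bm{1}^\top$. With $\hat{\bm{w}}=\bm{1}/\sqrt{N}$ we get $\hat{\bm{w}}^\top\cF_Q\hat{\bm{w}}=(\bm{1}^\top\bm{1})^2/N=N$. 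For any $\hat{\bm{v}}\perp\hat{\bm{w}}$ we have $\bm{1}^\top\hat{\bm{v}}=0$, so $\hat{\bm{v}}^\top\cF_Q\hat{\bm{v}}=(\bm{1}^\top\hat{\bm{v}})^2=0$. This gives \eqref{eq:ghzqfi}, and it is consistent with Theorem~\ref{thm:duality}.

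\emph{Uniqueness.} Let $\ket{\psi_0}$ be any equatorial probe satisfying $\cF_Q(\hat{\bm{w}}^\top\bm{\theta})=N$. The steps are as follows.
\begin{enumerate}
\item As in the proof of Theorem~\ref{thm:duality}, equatoriality forces every diagonal entry of $\cF_Q$ to equal $1$, so $\tr\cF_Q=N$.
\item Since $\cF_Q$ is positive semidefinite with eigenvalues summing to $N$, and its Rayleigh quotient at $\hat{\bm{w}}$ already equals $N$, the largest eigenvalue is $N$ with eigenvector $\hat{\bm{w}}$. All other eigenvalues are then $0$, so $\cF_Q=N\hat{\bm{w}}\hat{\bm{w}}^\top=\bm{1}\bm{1}^\top$.
\item Equivalently, the condition $\cF_Q(\hat{\bm{v}}^\top\bm{\theta})=0$ for all $\hat{\bm{v}}\perp\hat{\bm{w}}$, together with trace $N$, yields the same conclusion.
\item Comparing with \eqref{eq:qfim} and using $\langle\sigma_z^{(j)}\rangle=0$ gives $\langle\sigma_z^{(i)}\sigma_z^{(j)}\rangle=1$ for all $i,j$.
\end{enumerate}

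The key (and only mildly delicate) step is converting these expectation values into a statement about the support of the state. The operator $\sigma_z^{(i)}\sigma_z^{(j)}$ is Hermitian with spectrum $\{\pm1\}$. An expectation value equal to its maximum eigenvalue $1$ forces $\ket{\psi_0}$ into the $+1$ eigenspace; concretely, $\langle(I-\sigma_z^{(i)}\sigma_z^{(j)})\rangle=0$ with $I-\sigma_z^{(i)}\sigma_z^{(j)}\succeq 0$. Intersecting these eigenspaces over all pairs $(1,j)$ leaves computational basis strings whose bits all agree, namely $\mathrm{span}\{\ket{0\cdots0},\ket{1\cdots1}\}$.

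Write $\ket{\psi_0}=a\ket{0\cdots0}+b\ket{1\cdots1}$. Then $\langle\sigma_z^{(1)}\rangle=|a|^2-|b|^2=0$ gives $|a|=|b|=1/\sqrt{2}$. Up to a global phase, $\ket{\psi_0}=(\ket{0\cdots0}+e^{i\varphi}\ket{1\cdots1})/\sqrt{2}$. This equals $\ket{\Phi_N^+}$ after applying the $\sigma_z$-diagonal local unitary $\exp(-i\varphi\sigma_z/2)$ to a single qubit, up to a global phase. Such a unitary commutes with the encoding \eqref{eq:encoding}, which establishes uniqueness in the stated sense.
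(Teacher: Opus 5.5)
Your proof is correct, and it proves more than the paper's does. The existence half is the same as the paper's: both compute $\cF_Q(\bm{\theta})=\bm{1}\bm{1}^\top$ for $\ket{\Phi_N^+}$ and read off the two quadratic forms.

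The uniqueness half differs in two ways.

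\begin{enumerate}
\item \emph{Pinning down the QFIM.} The paper uses the second condition in \eqref{eq:ghzqfi}. Vanishing QFI on $\{\bm{1}\}^\perp$ plus positive semidefiniteness puts $\{\bm{1}\}^\perp$ in the null space, so $\cF_Q=c\,\bm{1}\bm{1}^\top$, and the equatorial trace $N$ gives $c=1$. You instead use the Heisenberg condition $\hat{\bm{w}}^\top\cF_Q\hat{\bm{w}}=N$ with a Rayleigh-quotient/trace argument. Your route makes the equatorial hypothesis redundant. For every pure probe $[\cF_Q]_{ii}=1-\langle\sigma_z^{(i)}\rangle^2\le 1$, so $\hat{\bm{w}}^\top\cF_Q\hat{\bm{w}}=N$ already forces $\tr\cF_Q=N$ and hence $\langle\sigma_z^{(i)}\rangle=0$ for all $i$. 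The paper's null-space route genuinely needs equatoriality: $\ket{0}^{\otimes N}$ satisfies the second condition with $c=0$.

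\item \emph{From QFIM to state.} The paper stops at $\cF_Q=\bm{1}\bm{1}^\top$. It only asserts that this matrix comes solely from GHZ and its images under $\sigma_z$-commuting local unitaries. Your argument supplies that missing step:
\begin{itemize}
\item $I-\sigma_z^{(1)}\sigma_z^{(j)}\succeq 0$ with zero expectation places $\ket{\psi_0}$ in each $+1$ eigenspace.
\item Intersecting these eigenspaces gives $\mathrm{span}\{\ket{0\cdots0},\ket{1\cdots1}\}$.
\item Equatoriality gives $|a|=|b|$.
\item The remaining relative phase is a single-qubit $\sigma_z$ rotation.
\end{itemize}
\end{enumerate}

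So your version establishes uniqueness at the level of states rather than QFIMs, which is what the statement actually claims.
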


\begin{proof}
First, $\sigma_z^{(i)}\sigma_z^{(j)}\ket{0}^{\otimes N}=\ket{0}^{\otimes N}$ and $\sigma_z^{(i)}\sigma_z^{(j)}\ket{1}^{\otimes N}=\ket{1}^{\otimes N}$. Therefore, $\sigma_z^{(i)}\sigma_z^{(j)}\ket{\Phi_N^+}=\ket{\Phi_N^+}$. Using this property while noting that GHZ is equatorial, we get $[\cF_Q(\bm{\theta})]_{ij}=1$ for all $i,j$. This implies that  $\cF_Q(\bm{\theta})=\bm{1}\bm{1}^\top$. As a result, $\cF_Q(\hat{\bm{w}}^\top \bm{\theta})=(\bm{1}^\top\hat{\bm{w}})^2=N$. Similarly, $\cF_Q(\hat{\bm{v}}^\top \bm{\theta})=(\bm{1}^\top\hat{\bm{v}})^2=0$ because $\hat{\bm{v}}\perp \bm{1}$. 

Now, we prove the uniqueness. Suppose $\cF_Q(\hat{\bm{v}}^\top \bm{\theta})=0$ for all $\hat{\bm{v}}\perp \bm{1}$.
Then, for any such $\hat{\bm{v}}$, $\cF_Q(\bm{\theta})\hat{\bm{v}}=0$, i.e., $\hat{\bm{v}}$
is in the null space of $\cF_Q(\bm{\theta})$ for every $\hat{\bm{v}}\perp \bm{1}$. The null space of $\cF_Q(\bm{\theta})$ therefore contains the $(N-1)$-dimensional
subspace $\{\bm{1}\}^\perp$, so $\cF_Q(\bm{\theta})=c\bm{1}\bm{1}^\top$ for some $c\geq 0$.
From, the proof of Theorem~\ref{thm:duality}, we know that, for any equatorial probe state, $\tr(\cF_Q)=N$. Hence, $c=1$. This is produced by the GHZ state and other states that follow from manipulating GHZ state via local unitaries that commute with the encoding~\eqref{eq:encoding} to keep $\langle \sigma_z^{(i)} \sigma_z^{(j)}\rangle$ the same.
\end{proof}

Corollary~\ref{cor:coopt} reveals an interesting quantum property. GHZ provides a perfect probe state for estimating $\hat{\bm{w}}^\top \bm{\theta}$ with $\hat{\bm{w}}=\bm{1}/\sqrt{N}$ while no other target $\hat{\bm{v}}^\top \bm{\theta}$ ($\hat{\bm{v}}\perp \hat{\bm{w}}$) can be estimated. That is, if the experiment is set to estimate $\hat{\bm{w}}^\top \bm{\theta}$, any curious (in terms of estimating $\hat{\bm{v}}^\top \bm{\theta}$) but honest (captured by not changing the probe state) estimator cannot simultaneously estimate $\hat{\bm{v}}^\top \bm{\theta}$.

\begin{corollary}[Separable suboptimality]
\label{cor:sep}
For separable probe state $\ket{+}^{\otimes N}$ and any two unit vectors $\hat{\bm{w}}\perp \hat{\bm{v}}$:
\begin{equation}
\cF_Q(\hat{\bm{w}}^\top \bm{\theta})=
\cF_Q(\hat{\bm{v}}^\top \bm{\theta})=1.
\label{eq:sepqfi}
\end{equation}
\end{corollary}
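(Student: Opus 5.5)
The plan is to compute the full QFIM for $\ket{+}^{\otimes N}$ directly from \eqref{eq:qfim} and show that it equals the identity matrix $I_N$. Both equalities in \eqref{eq:sepqfi} then follow immediately from $\cF_Q(\hat{\bm{u}}^\top\bm{\theta})=\hat{\bm{u}}^\top\cF_Q\hat{\bm{u}}$.

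First, the diagonal entries. Since $\bra{+}\sigma_z\ket{+}=0$, the probe $\ket{+}^{\otimes N}$ is equatorial, as already noted before Theorem~\ref{thm:duality}. Hence, exactly as in the equality part of the proof of Theorem~\ref{thm:duality}, $[\cF_Q]_{ii}=\langle(\sigma_z^{(i)})^2\rangle=1$.

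Second, the off-diagonal entries, for $i\neq j$. Because the state is a product state, the expectation factorises:
\begin{equation*}
\langle\sigma_z^{(i)}\sigma_z^{(j)}\rangle=\bra{+}\sigma_z\ket{+}\,\bra{+}\sigma_z\ket{+}=0 .
\end{equation*}
We also have $\langle\sigma_z^{(i)}\rangle\langle\sigma_z^{(j)}\rangle=0$, so $[\cF_Q]_{ij}=0$. Together these give $\cF_Q(\bm{\theta})=I_N$.

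Finally, for any unit vector $\hat{\bm{u}}$ we get $\cF_Q(\hat{\bm{u}}^\top\bm{\theta})=\hat{\bm{u}}^\top I_N\hat{\bm{u}}=\|\hat{\bm{u}}\|_2^2=1$. Applying this to $\hat{\bm{u}}=\hat{\bm{w}}$ and to $\hat{\bm{u}}=\hat{\bm{v}}$ yields \eqref{eq:sepqfi}.

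Orthogonality of $\hat{\bm{w}}$ and $\hat{\bm{v}}$ is not actually needed. Note also that the sum equals $2$, which is strictly below $N$ for $N>2$, so this probe is suboptimal relative to the bound \eqref{eq:duality}.

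There is no real obstacle here. The only point requiring care is the factorisation of the expectation value over the tensor product, which holds because $\sigma_z^{(i)}\sigma_z^{(j)}$ acts as a tensor product of single-qubit operators on a product state.
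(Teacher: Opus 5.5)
Your proposal is correct and follows essentially the same route as the paper: you show $\cF_Q(\bm{\theta})=I$, with unit diagonal because the probe is equatorial and zero off-diagonal entries because the product state makes $\langle\sigma_z^{(i)}\sigma_z^{(j)}\rangle$ vanish for $i\neq j$. You then evaluate $\hat{\bm{u}}^\top I\hat{\bm{u}}=1$ for each unit direction. Your explicit factorisation argument and your remark that orthogonality of $\hat{\bm{w}}$ and $\hat{\bm{v}}$ is not needed are both accurate.
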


\begin{proof}
For $\ket{+}^{\otimes N}$, $\langle \sigma_z^{(i)} \sigma_z^{(j)}\rangle=0$ if $i\neq j$. Using this, and again noting that $\ket{+}^{\otimes N}$ is equatorial, we get that $\cF_Q(\bm{\theta})=I$. As a result, $\cF_Q(\hat{\bm{w}}^\top \bm{\theta})=\hat{\bm{w}}^\top\hat{\bm{w}} =1$ and $\cF_Q(\hat{\bm{v}}^\top \bm{\theta})=\hat{\bm{v}}^\top \hat{\bm{v}}=1$.
\end{proof}

Corollary~\ref{cor:sep} shows that, for the separable probe $\ket{+}^{\otimes N}$,
$\cF_Q(\hat{\bm{w}}^\top \bm{\theta})+
\cF_Q(\hat{\bm{v}}^\top \bm{\theta})=2$ for all $N$. That is, the separable probe state would not saturate the bound in Theorem~\ref{thm:duality}, except for $N=2$. Separable states are not ideal for achieving supremacy in distributed quantum sensing. 

Corollaries~\ref{cor:coopt} and~\ref{cor:sep} together show that GHZ probe states beat separable ones not only on precision for estimating $\hat{\bm{w}}^\top \bm{\theta}$ (by a factor of $N$), but also in terms of  precision-privacy as $\cF_Q(\hat{\bm{v}}^\top \bm{\theta})=1$ for separable probes while $\cF_Q(\hat{\bm{v}}^\top \bm{\theta})=0$ for GHZ probes.

When $r\geq 2$ independent functions must be estimated
simultaneously, a genuine precision tradeoff emerges.
For two sensing targets $f_1=\hat{\bm{w}}^\top \bm{\theta}$ and $f_2=\hat{\bm{v}}^\top\bm{\theta}$
with orthogonal sensing directions $\bm{v}\perp\bm{w}$, Theorem~\ref{thm:duality}
gives $\mathcal{F}_Q(\hat{\bm{w}}^\top \bm{\theta})+\mathcal{F}_Q(\hat{\bm{v}}^\top\bm{\theta})\leq N$. Thus, for any probe state achieving $\mathcal{F}_Q(\hat{\bm{v}}^\top\bm{\theta})\geq\delta>0$, it  must also hold that $\mathcal{F}_Q(\hat{\bm{w}}^\top \bm{\theta})\leq N-\delta.$
Heisenberg scaling in one direction therefore forces all other directions unestimable. Individual parameter recovery is a special case because estimating $\theta_j$ requires positive QFI for direction $\hat{\bm{v}}_j=(\bm{e}_j-(\bm{e}_j^\top \hat{\bm{w}})\hat{\bm{w}})/\sqrt{1-1/N}$. However, this observation clearly extends to any two distinct targets. 

\section{Discussion and Conclusion}

The QFI duality (Theorem~\ref{thm:duality}) is a structural consequence of local unitary encoding. The trace identity $\tr(\cF_Q(\bm{\theta}))=N$ fixes the total Fisher information across all directions. Concentrating it into the sensing direction $\hat{\bm{w}}$ (Heisenberg scaling) depletes it from all directions (e.g., individual privacy). The GHZ probe achieves this concentration uniquely, co-optimising precision and privacy for single-target sensing. GHZ encodes only $\sum_j\theta_j$ because its rank-1 QFIM
$\bm{1}\bm{1}^\top$ concentrates all the Fisher information in the sum direction, enabling protection of individual parameters.

\textit{Physical implications.} In atomic clock
networks~\cite{komar2014quantum}, GHZ operation already required for Heisenberg-limited synchronisation simultaneously protects each institution's clock offsets.
In NV-centre brain imaging~\cite{barry2016optical}, the same entangled-probe design that maximises sensitivity protects per-sensor patient data under regulatory privacy requirements. In quantum gravimeter networks~\cite{stray2022quantum}, individual-site
gravitational measurements remain private to an adversary. In each case, the privacy guarantee is a
free byproduct of the Heisenberg-scaling architecture.

\textit{Conclusion.} We have proved a QFI duality that formally unifies precision and privacy in distributed quantum sensing. Heisenberg-limited precision for a single sensing target implies complete individual parameter privacy for that target. GHZ probes uniquely co-optimise precision and privacy, while separable probes fail to reach the Pareto frontier for more than two sensors. For multiple simultaneous targets, a quantitative tradeoff with a closed-form Pareto boundary emerges. These results reframe
entanglement in quantum sensor networks: it is not merely a resource for beating the standard quantum limit, but the mechanism by which precision and privacy are simultaneously optimised. Privacy for free is not present in classical frameworks~\cite{kifer2011no}.

\bibliography{ref}

\end{document}